\documentclass[11pt]{article}

\usepackage{times}
\usepackage{abstract}
\usepackage[margin=.75in]{geometry}
\usepackage{amsmath}
\usepackage{amssymb}
\usepackage[numbers,sort&compress]{natbib}

\usepackage[pagebackref]{hyperref}
\usepackage{amsthm}
\newtheorem{theorem}{Theorem}[section]
\newenvironment{theorem_cite}[1]{\begin{theorem} {\rm (#1)}}{\end{theorem}}

\usepackage{aliascnt}
\usepackage[nameinlink]{cleveref}
\newaliascnt{lemma}{theorem}
\newtheorem{lemma}[lemma]{Lemma}
\aliascntresetthe{lemma}
\crefname{lemma}{lemma}{lemmas}
\newaliascnt{corollary}{theorem}
\newtheorem{corollary}[corollary]{Corollary}
\aliascntresetthe{corollary}
\crefname{corollary}{corollary}{corollaries}
\theoremstyle{definition}
\newaliascnt{definition}{theorem}
\newtheorem{definition}[definition]{Definition}
\aliascntresetthe{definition}
\crefname{definition}{Definition}{Definitions}
\Crefname{definition}{Definition}{Definitions}

\crefalias{definition_cite}{definition}
\newaliascnt{remark}{theorem}
\newtheorem{remark}[remark]{Remark}
\aliascntresetthe{remark}
\crefname{remark}{remark}{remarks}
\usepackage{thm-restate}

\usepackage{amsfonts}
\usepackage{amsmath}
\usepackage{amssymb}
\usepackage{enumerate}
\usepackage{xspace}

\newcommand{\CCfont}[1]{\ensuremath{\mathsf{#1}}}

\newcommand{\pair}[1]{\langle #1 \rangle}

\newcommand{\prprefix}{\underset{\neq}{\sqsubset}}

\newcommand{\p}{{\CCfont{p}}}
\newcommand{\pspace}{{\CCfont{pspace}}}
\newcommand{\all}{{\CCfont{all}}}

\newcommand{\C}{\mathbf{C}}

\newcommand{\NP}{\CCfont{NP}}
\newcommand{\SIZE}{\CCfont{SIZE}}
\newcommand{\ZPP}{\CCfont{ZPP}}
\newcommand{\E}{\CCfont{E}}
\newcommand{\ESPACE}{\CCfont{ESPACE}}

\renewcommand{\P}{\CCfont{P}}
\newcommand{\DeltaE}{\Delta^\E}
\newcommand{\DeltaEthree}{\DeltaE_3}
\newcommand{\SPtwo}{\CCfont{S}^\P_2}

\newcommand{\poly}{\CCfont{poly}}
\newcommand{\Ppoly}{{\P/\poly}}

\newcommand{\FP}{\CCfont{FP}}

\newcommand{\calA}{{\cal A}}

\renewcommand{\FP}{\CCfont{FP}}

\newcommand{\twonn}{\frac{2^n}{n}}
\newcommand{\lutzbound}{\twonn\left(1+\frac{\alpha \log n}{n}\right)}

\newcommand{\FZPP}{\CCfont{FZPP}}

\newcommand{\FSPtwo}{\CCfont{FS}^\P_2}
\newcommand{\SEtwo}{\CCfont{S}^\E_2}

\renewcommand{\varepsilon}{\epsilon}

 \renewcommand{\C}{\CCfont{C}}

\title{Exponential-Size Circuit Complexity is\\Comeager in Symmetric Exponential Time}

\author{John M. Hitchcock\thanks{Department of Electrical Engineering and Computer Science, University of Wyoming. jhitchco@uwyo.edu. This research was supported in part by NSF grant 2431657.}}

\date{}

\begin{document}
\maketitle
\begin{abstract}
	Lutz (1987) introduced resource-bounded category and showed the circuit size class $\SIZE(\twonn)$ is meager within $\ESPACE$. Li (2024) established that the symmetric alternation class $\SEtwo$ contains problems requiring circuits of size $\twonn$.

	In this note, we extend resource-bounded category to $\SEtwo$ by defining meagerness relative to single-valued $\FSPtwo$ strategies in the Banach-Mazur game. We show that Li’s $\FSPtwo$ algorithm for the Range Avoidance problem yields a winning strategy, proving that $\SIZE(\twonn)$ is meager in $\SEtwo$.

	Consequently, languages requiring exponential-size circuits are comeager in $\SEtwo$: they are typical with respect to resource-bounded category.

\end{abstract}
\section{Introduction}

Resource-bounded category, introduced by Lutz~\cite{Lutz87,Lutz:thesis,Lutz:CMCC}, provides a way to analyze the typical properties of complexity classes using topological notions of size. Lutz showed that the class $\SIZE(\twonn)$ of languages decidable by circuits of size $\twonn$ is \emph{meager}
within the exponential-space class $\ESPACE$. \begin{theorem_cite}{Lutz~\cite{Lutz87,Lutz:thesis,Lutz:CMCC}}
	$\SIZE(\twonn)$ is meager in $\ESPACE$.
\end{theorem_cite}

The symmetric exponential-time class $\SEtwo$ is an exponential version of the class $\SPtwo$~\cite{Canetti96,RussellSundaram98} and it lies between $\E$ and $\ESPACE$. Building on work by Korten~\cite{Korten22} and Chen, Hirahara, and Ren~\cite{ChenHiraharaRen24}, Li~\cite{Li24} recently proved that $\SEtwo$ contains languages that require near-maximum circuit size.
\begin{theorem_cite}{Li~\cite{Li24}}\label{th:Li_intro}
	$\SEtwo\not\subseteq\SIZE(\twonn)$.
\end{theorem_cite}

\hyphenation{ex-po-nen-tial-size}
This solved a longstanding open problem to show the second level of the exponential-time hierarchy requires exponential-size circuits~\cite{Kann82,BuFoTh98,Mayo94,MiViWa99,Hitchcock:DERC,Aaronson:NECLB,Aydinlioglu:DAMG}.
Li's proof technique involves constructing a hard language using a \emph{single-valued} $\FSPtwo$ algorithm (a function problem analogue of $\SPtwo$). Given these developments, a natural question is whether the category framework can be refined to operate within $\SEtwo$. Can we show that languages requiring large circuits are not just present, but are in fact \emph{typical} within $\SEtwo$?

In this note, we answer this question affirmatively. We adapt the resource-bounded category machinery to the class $\SEtwo$ by defining meagerness relative to single-valued $\FSPtwo$-computable constructors in the Banach-Mazur game.
Our main result shows that, analogous to Lutz's result for $\ESPACE$, small circuits are atypical within $\SEtwo$.
\begin{restatable}{theorem}{MainTheorem}\label{th:main}
	$\SIZE(\twonn)$ is meager in $\SEtwo$.
\end{restatable}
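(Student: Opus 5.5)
We will prove the theorem by exhibiting a single-valued $\FSPtwo$ strategy for the second player in the Banach--Mazur game that defeats every language in $\SIZE(\twonn)$. Here, as in Lutz's framework, a constructor $\delta$ maps a finite string $w$, a prefix of a characteristic sequence, to a proper extension $\delta(w)$. A class $X$ is meager in $\SEtwo$ when some constructor $\delta$ computable by a single-valued $\FSPtwo$ algorithm (with running time measured appropriately, polynomial in $|w|$, hence $2^{O(n)}$ in terms of the input lengths being decided) has the following property: for every sequence $w_0 \sqsubseteq \delta(w_0) \sqsubseteq w_1 \sqsubseteq \delta(w_1)\sqsubseteq\cdots$, the limit language lies outside $X$. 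It therefore suffices to build $\delta$ so that every language $A$ produced by any play against $\delta$ is not in $\SIZE(\twonn)$.

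\textbf{Construction.} Given $w$, let $n$ be the least length such that $w$ does not yet determine any string of length $n$. The constructor first pads $w$ with zeros up to the first string of length $n$. It then appends a block of $2^n$ bits: the truth table $T_n$ of a function on $\{0,1\}^n$ that no circuit of size $\twonn$ computes. After this, every language extending $\delta(w)$ has $A^{=n}=T_n$. Since the opponent must move infinitely often, each round places such a hard block at a fresh length, so $A$ has infinitely many hard lengths. Consequently $A\notin\SIZE(\twonn)$, because membership in that class requires small circuits at all sufficiently large lengths.

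\textbf{Computing the hard truth table.} To find $T_n$ we invoke Li's single-valued $\FSPtwo$ algorithm for Range Avoidance, applied to the circuit-evaluation map $C:\{0,1\}^{m}\to\{0,1\}^{2^n}$ that sends an encoding of a size-$\twonn$ circuit to its truth table. Encodings of circuits of that size take fewer than $2^n$ bits, so $m<2^n$. Li's algorithm outputs a canonical non-output of $C$, the same string on every correct execution, using time polynomial in $2^n$. Since $|w|\ge 2^n-1$, this is polynomial in $|w|$. Padding and concatenation are clearly in $\FP$, and composing an $\FP$ step with a single-valued $\FSPtwo$ function keeps the result single-valued $\FSPtwo$. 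Hence $\delta$ is a legitimate constructor.

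\textbf{Main obstacle.} The delicate point is matching the time bounds of the category definition to Li's algorithm. Li's procedure runs in time polynomial in the output length $2^n$ and in the circuit size, and we need that to be polynomial in $|w|$. This requires choosing $n$ so that $2^n = O(|w|)$. If $w$ skips ahead to very long prefixes, the first undetermined length could be tiny relative to $|w|$; that only helps. The opposite failure, where we need a length far beyond $|w|$, never occurs, because we always take the first undetermined length. We must also confirm that the single-valuedness of Li's algorithm (a canonical output) is exactly what the definition of $\FSPtwo$ constructors demands, so that the language generated from a play remains well-defined.
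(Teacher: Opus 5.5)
Your proposal is correct and is essentially the paper's proof. It uses the same constructor (pad with zeros to complete length $n-1$, then append Li's hard truth table for length $n$), the same infinitely-many-hard-lengths argument, and the same route of showing $\SIZE(\twonn)$ itself is $\FSPtwo$-meager. Where the paper simply cites \Cref{th:simple}, you unpack it as range avoidance on the truth-table map. That step needs a compact circuit encoding such as Frandsen--Miltersen's to get $m<2^n$, because the naive $2s\log s$-bit encoding exceeds $2^n$. One small slip: with your choice of $n$ you have $2^{n-1}\le|w|\le 2^n-1$, not $|w|\ge 2^n-1$, but $2^n\le 2|w|$ still gives the needed polynomial bound.
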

We prove this by demonstrating that Li's single-valued $\FSPtwo$ algorithm for the range avoidance problem \cite{Korten22,ChenHiraharaRen24} provides a winning strategy against $\SIZE(\twonn)$ in the appropriate Banach-Mazur game.  Thus, languages requiring exponential-size circuits are not just present in $\SEtwo$—they are typical in the sense of resource-bounded category: the class of such languages is comeager in $\SEtwo$.

This paper is organized as follows. We present preliminaries on resource-bounded category and symmetric alternation in \Cref{sec:prelim}. Our results are in \Cref{sec:category_in_S2E}. We conclude in \Cref{sec:conclusion} with an open question about extending our results to resource-bounded measure.

\section{Preliminaries}\label{sec:prelim}

\subsection{Resource-Bounded Category}\label{sec:category}

Resource-bounded category was introduced by Lutz \cite{Lutz87,Lutz:thesis,Lutz:CMCC}.
The {\em Cantor space} $\C$ is the set of all infinite binary
sequences.  A {\em language} (or {\em decision problem}) is a subset
of $\{0,1\}^*$.  We identify each language with the element of Cantor
space that is its characteristic sequence according to the standard
enumeration $s_0 = \lambda, s_1 = 0, s_2 = 1, s_3 = 00, \ldots$ of $\{0,1\}^*$.  In this way, complexity classes (sets of languages) are viewed as subsets of Cantor space.

Baire category classifies sets into two types: {\em first category}
and {\em second category}.  First category sets are also commonly
called {\em meager}.  A set is meager if it is a countable union of
nowhere dense sets.  An equivalent definition comes from Banach-Mazur
games using functions called constructors.

\begin{definition}
	A {\em constructor} is a total, single-valued function $\delta : \{0,1\}^* \to \{0,1\}^*$ which satisfies $x \prprefix \delta(x)$ for all $x \in \{0,1\}^*$.
	The {\em result} of a constructor is the unique sequence $R(\delta)
		\in \C$ that extends $\delta^{(n)}(\lambda)$ for all $n$. \end{definition}

Let $X \subseteq \C$ and let $\Gamma_{\rm I}$ and $\Gamma_{\rm II}$ be
two classes of functions.  In the {\em Banach-Mazur game}
$G[X;\Gamma_{\rm I},\Gamma_{\rm II}]$ there are two players I and
II.  A {\em strategy} in the game is a constructor.  In a play of
the game, player I chooses a strategy $\gamma \in \Gamma_{\rm I}$ and
player II chooses a strategy $\delta \in \Gamma_{\rm II}$.  The {\em
		result} of this play is the sequence $R(\gamma,\delta) = R(\delta \circ \gamma)$.
Intuitively, the result is the sequence obtained when the two
players start with the empty string and take turns extending it with
their strategies.  A {\em winning strategy} for player II is a
strategy $\delta \in \Gamma_{\rm II}$ such that for every $\gamma \in
	\Gamma_{\rm I}$, $R(\gamma,\delta) \not\in X$.

\begin{theorem_cite}{Banach and Mazur}
	A class $X \subseteq \C$ is meager if and only if player II has a
	winning strategy in the game $G[X;\all,\all]$.
\end{theorem_cite}

In resource-bounded category, $\Delta$ denotes a {\em resource bound}
\cite{Lutz:AEHNC,Lutz:CMCC}.  Examples of $\Delta$ include:
\begin{eqnarray*}
	\all &=& \{ f \mid f : \{0,1\}^* \to \{0,1\}^* \}\\
	\p &=& \{ f \mid f\textrm{ is polynomial-time computable} \}\\
	\pspace &=& \{ f \mid f\textrm{ is polynomial-space computable} \}\end{eqnarray*}
\noindent
For a resource bound $\Delta$, we define the {\em result class}
$$R(\Delta) = \{ R(\delta) \mid \delta \in \Delta\textrm{ is a constructor} \}.$$
Then $R(\all) = \C$, $R(\p) = \E$,
and $R(\pspace) = \ESPACE$ \cite{Lutz:CMCC}. Resource-bounded category \cite{Lutz:CMCC} is defined by requiring
player II's winning strategy to be computable within a resource bound $\Delta$.

\begin{definition}\label{def:resource_bounded_category}
	Let $X \subseteq \C$ and let $\Delta$ be a resource bound.
	\begin{enumerate}
		\item  $X$ is {\em $\Delta$-meager} if player II has a winning
		      strategy in the game $G[X;\all,\Delta]$.
		\item  $X$ is {\em $\Delta$-comeager} if $X^c$ is $\Delta$-meager.
		\item  $X$ is {\em meager in $R(\Delta)$} if $X \cap R(\Delta)$
		      is $\Delta$-meager.
		\item  $X$ is {\em comeager in $R(\Delta)$} if $X^c$ is meager in $R(\Delta)$.
	\end{enumerate}
\end{definition}

\noindent
The {\em resource-bounded Baire category theorem} \cite{Lutz:CMCC}
tells us that $R(\Delta)$ is not $\Delta$-meager.

\subsection{Symmetric Alternation}

Symmetric alternation is a concept in computational complexity theory introduced independently by Canetti \cite{Canetti96} and Russell and Sundaram \cite{RussellSundaram98} with different, but equivalent definitions \cite{Cai07}. We follow the definition due to Canetti.

\newcommand{\StwoTIME}{\CCfont{S}_2\CCfont{TIME}}

\begin{definition}\label{def:S2TIME_Canetti}
	Let \(T\colon\mathbb{N}\to\mathbb{N}\).
	We say that a language \(L\in \StwoTIME[T(n)]\) if there exists an \(O(T(n))\)-time verifier
	$V(x,\pi_1,\pi_2)$
	that takes
	$
		x\in\{0,1\}^n$ and $\pi_1,\pi_2\in\{0,1\}^{T(n)}$
	as input, satisfying
	\[(\exists \pi_1)(\forall \pi_2) V(x,\pi_1,\pi_2) = \chi_L(x)\]
	and  \[(\exists \pi_2)(\forall \pi_1) V(x,\pi_1,\pi_2) = \chi_L(x).\]
\end{definition}

\begin{definition}
	We define the symmetric alternation complexity classes
	\[\SPtwo = \bigcup_{c = 1}^\infty\StwoTIME[n^c].
	\]
	and \[
		\SEtwo = \bigcup_{c = 1}^\infty\StwoTIME[2^{cn}].
	\]
\end{definition}

Symmetric alternation is used to compute single-valued functions as follows \cite{Li24}.

\begin{definition}\label{def:single-valued-FS2P}
	A \emph{single‐valued} $\FSPtwo$ algorithm $A$ is specified by a polynomial $\ell(\cdot)$ together with a polynomial‐time verifier
	$V_A(x,\pi_1,\pi_2).
	$ On input $x\in\{0,1\}^*$, we say that $A$ \emph{outputs} a string $y_x\in\{0,1\}^*$ if the following two conditions hold:
	\begin{enumerate}
		\item There exists $\pi_1\in\{0,1\}^{\ell(|x|)}$ such that for every $\pi_2\in\{0,1\}^{\ell(|x|)}$, \allowbreak
		      $$V_A(x,\pi_1,\pi_2)\;\allowbreak=\;y_x.
		      $$\item There exists $\pi_2\in\{0,1\}^{\ell(|x|)}$ such that for every $\pi_1\in\{0,1\}^{\ell(|x|)}$, \allowbreak
		      $$V_A(x,\pi_1,\pi_2)\;=\;y_x.
		      $$\end{enumerate}
\end{definition}

Note that \Cref{def:S2TIME_Canetti} prescribes a single-valued $\FSPtwo$ algorithm for $\chi_L$.

Li \cite{Li24} used a single-valued $\FSPtwo$ algorithm for the range avoidance problem \cite{Korten22,ChenHiraharaRen24} to prove \Cref{th:Li_intro} that $\SEtwo$ requires exponential-size circuits.
The following is a corollary to Li's proof.

\begin{theorem_cite}{Li \cite{Li24}}\label{th:simple}
	There is an $\FSPtwo$ algorithm $\calA$ that on input $^{2^n}$ outputs a truth-table $f \in \{0,1\}^{2^n}$ such that $f$ has circuit complexity at least $\frac{2^n}{n}$.
\end{theorem_cite}

\begin{remark}
	The $\twonn$ bound in \Cref{th:Li_intro}, \Cref{th:simple}, and our results in Section \ref{sec:category_in_S2E} may be replaced by $\lutzbound$ for any fixed $\alpha\in(0,1)$
	\cite{Lutz:AEHNC,FraMil05,Hitchcock:CMMDCC}.  We use $\twonn$ to simplify notation.
\end{remark}

\section{Category in Symmetric Exponential Time}\label{sec:category_in_S2E}

Recall that $R(\p)=\E$ and $R(\pspace)=\ESPACE$. We now show that the result class of $\FSPtwo$ is $\SEtwo$.

\begin{lemma}\label{le:RFSPtwo_equals_SEtwo}
	$R(\FSPtwo) = \SEtwo$.
\end{lemma}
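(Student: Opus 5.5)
We prove the two inclusions separately, mirroring Lutz's argument that $R(\p)=\E$. Here $R(\FSPtwo)$ means the set of results $R(\delta)$ of constructors $\delta$ computed by single-valued $\FSPtwo$ algorithms, in the sense of \Cref{def:single-valued-FS2P}.

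\emph{($\subseteq$).} Let $\delta$ be a constructor computed by a single-valued $\FSPtwo$ algorithm with verifier $V$ and polynomial $\ell$. To decide whether $s_N \in R(\delta)$ for a string $s_N$ of length $n$ (so $N<2^{n+1}$), iterate $x_0=\lambda$, $x_{i+1}=\delta(x_i)$ until $|x_k|>N$, and output bit $N$ of $x_k$. Since $\delta$ strictly extends its input, at most $N+1\le 2^{n+1}$ iterations are needed, and each $|x_i|\le 2^{O(n)}$ because polynomial-time output bounds compose only $k$ times along a chain where each step is applied to a string of length at most $N$. Precisely, $|x_i|\le N$ for $i<k$, so $|x_k|\le \poly(N)=2^{O(n)}$.

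The $\SEtwo$ verifier takes proofs $\Pi_1=(\pi_1^{(0)},\dots,\pi_1^{(k')})$ and $\Pi_2$ similarly, one certificate per iteration, padded to length $2^{cn}$. It runs the iteration using $V(x_i,\pi_1^{(i)},\pi_2^{(i)})$ to compute each step and outputs the required bit. If $\Pi_1$ consists of honest first-player certificates, then by condition~1 every step yields $\delta(x_i)$ regardless of $\Pi_2$, by induction on $i$. The case of an honest $\Pi_2$ is symmetric via condition~2. Hence the verifier outputs $\chi_{R(\delta)}(s_N)$ in both quantifier orders. Its running time is $2^{O(n)}$ iterations times $\poly(2^{O(n)})$ per step, which is $2^{O(n)}$. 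One detail is that the number of iterations must not depend on the proofs in a way that breaks totality, so we also cap the iteration count at $N+1$ and output $0$ on any malformed transcript. The honest transcripts never trigger this cap, since $V$ then returns a proper extension each time.

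\emph{($\supseteq$).} Let $L\in\StwoTIME[2^{cn}]$ with verifier $V_L$. Define $\delta(x)=x\,\chi_L(s_{|x|})$, which appends the next bit; its result is $L$. To compute $\delta$ by a single-valued $\FSPtwo$ algorithm on input $x$ of length $m$, note that $s_m$ has length about $\log m$. Hence $V_L$ on $s_m$ runs in time $2^{c|s_m|}\le \poly(m)$ and uses proofs of length $\poly(m)$. The new verifier is $V(x,\pi_1,\pi_2)=x\,V_L(s_m,\pi_1,\pi_2)$, and the two conditions of \Cref{def:single-valued-FS2P} follow directly from those of \Cref{def:S2TIME_Canetti}. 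This gives $L\in R(\FSPtwo)$.

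The main obstacle is the $\subseteq$ direction: making sure the composed verifier is correct against adversarial proofs in both quantifier orders. The per-step inductive argument handles this, because the honest player's certificate for step $i$ is chosen after fixing the honest value $x_i$, which the induction guarantees has been computed correctly. The length accounting, showing $|x_k|$ stays $2^{O(n)}$, is routine once we stop at the first iterate exceeding length $N$.
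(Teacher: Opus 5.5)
Your proposal is correct and follows essentially the same route as the paper: for $\subseteq$ you simulate the iterated constructor with one pair of certificates per step, and for $\supseteq$ you append $V_L(s_{|x|},\pi_1,\pi_2)$, using that $|s_{|x|}| = O(\log|x|)$. Your version is slightly more careful than the paper's. You make the inductive correctness argument against adversarial certificates explicit, and you end the iteration by length with a cap of $N+1$ steps, where the paper uses a number of rounds supplied by the proofs.
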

\begin{proof}
	Let $L = R(A)$ where $A$ is a single-valued $\FSPtwo$ algorithm. Let $V_A$ be as in \Cref{def:single-valued-FS2P}.
	On input $x$, let $n$ be the index such that $s_n = x$. The idea is to  start with $z_0 = \lambda$ and compute $z_{i+1} = A(z_i)$ until $|z_i| > n$. Then the answer for $x$ is $z_i[n]$. We make at most $n \leq 2^{|x|}$ calls to $A$.
	Formally, define a verifier $V$ that on input $s_n = x$ also takes two tuples of proofs $t_1=\pair{\pi_1^{(0)},\ldots,\pi_{1}^{(k-1)}}$ and
	$t_2=\pair{\pi_2^{(0)},\ldots,\pi_{2}^{(k-1)}}$. Then we compute $z_0 = \lambda$ and $z_{i+1} = V_A(z_i,\pi_1^{(i)},\pi_2^{(i)})$ for each $0 \leq i < k$.
	If $z_i \prprefix z_{i+1}$ for all $i$ and
	$|z_k| > n$, then $V(x,t_1,t_2)$ outputs $z_i[n]$. Otherwise, $V(x,t_1,t_2)$ outputs $0$.
	The total length of the proofs is polynomial in $n$, so the total proof length is $2^{O(|x|)}$. Since $V_A$ is polynomial-time, $V$ runs in exponential-time and it follows that $L \in \SEtwo$.

	Let $L \in \SEtwo$ and let $V$ be a verifier for $L$ as in \Cref{def:S2TIME_Canetti}. On input $z$, we let $n = |z|$. We use $V$ on input $s_n$ to extend $z$ by one bit. If $V(s_n,\pi_1,\pi_2) = 1$, we output $V'(z,\pi_1,\pi_2) = z1$. If $V(s_n,\pi_1,\pi_2) = 0$, we output $V'(z,\pi_1,\pi_2) = z0$. Then $V'$ specifies a single-valued $\FSPtwo$ algorithm because $|s_n|$ is logarithmic in $|z|$. We have $R(V')=L$, so $L \in R(\FSPtwo)$.
\end{proof}

We can now fully specify the symmetric alternation category notion used in this paper. We plug \Cref{le:RFSPtwo_equals_SEtwo} into \Cref{def:resource_bounded_category}:

\begin{enumerate}
	\item  $X$ is {\em $\FSPtwo$-meager} if player II has a winning
	      strategy in the game $G[X;\all,\FSPtwo]$.
	\item  $X$ is {\em $\FSPtwo$-comeager} if $X^c$ is $\FSPtwo$-meager.
	\item  $X$ is {\em meager in $\SEtwo$} if $X \cap \SEtwo$
	      is $\FSPtwo$-meager.
	\item  $X$ is {\em comeager in $\SEtwo$} if $X^c$ is meager in $\SEtwo$.
\end{enumerate}

Lutz \cite{Lutz87} proved that $\SIZE(\twonn)$ is meager in $\ESPACE$ where the strategy uses a brute force voting method inspired by Kannan \cite{Kann82}. We can replace this by Li's single-valued $\FSPtwo$ algorithm from \Cref{th:simple}.
\begin{theorem}\label{th:main_FSPtwo_meager}
	$\SIZE(\twonn)$ is $\FSPtwo$-meager.
\end{theorem}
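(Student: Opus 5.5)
Player II's strategy $\delta$ will work as follows. Given the current prefix $x\in\{0,1\}^*$, let $N=|x|$ and choose $n$ with $2^{n+1}-1 > N$ minimal-ish; more precisely, take the least $n$ such that no string of length $n$ has yet been decided by $x$, i.e. $|x| \le 2^n-1$ (the index of $0^n$ in the standard enumeration). First pad $x$ with zeros up to length $2^n-1$, so that every string of length less than $n$ is decided and no string of length $n$ is. Then run Li's algorithm $\calA$ from \Cref{th:simple} on input $1^{2^n}$ to obtain a truth table $f\in\{0,1\}^{2^n}$ of circuit complexity at least $\twonn$, and append $f$. Formally, $\delta(x) = x\,0^{2^n-1-|x|}\,f_n$. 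This is a constructor, since it always appends at least $2^n\ge 1$ bits. Whatever player I does in between, the result $R(\gamma,\delta)$ agrees with $f_n$ on all strings of length $n$ for infinitely many $n$: each move of $\delta$ handles a fresh length, and lengths grow unboundedly. Hence the characteristic function of the result restricted to $\{0,1\}^n$ has circuit complexity at least $\twonn$ for infinitely many $n$, so the result is not in $\SIZE(\twonn)$. (I will use the convention that $\SIZE(\twonn)$ requires circuits of size $<\twonn$ for all sufficiently large $n$. The "infinitely often" hardness then suffices, and the strict/nonstrict issue is absorbed by the remark's $\lutzbound$ slack.)

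The main step is to check that $\delta$ is a single-valued $\FSPtwo$ function in the sense of \Cref{def:single-valued-FS2P}. The input is $x$ with $|x|=N$, and $2^n \le 2N+2$, so $1^{2^n}$ has length polynomial in $N$. The proofs for $\calA$ on $1^{2^n}$ therefore have length $\ell(2^n)=\mathrm{poly}(N)$. The verifier $V_\delta(x,\pi_1,\pi_2)$ computes $n$, computes the padding, and runs $V_\calA(1^{2^n},\pi_1,\pi_2)$ to get a string $g$. If $|g|=2^n$ it outputs $x\,0^{2^n-1-|x|}\,g$; otherwise it outputs some fixed default extension such as $x0$. Because $\calA$ is single-valued, there is a $\pi_1$ making $V_\calA$ output exactly $f_n$ against every $\pi_2$, and symmetrically for $\pi_2$. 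Since the wrapping is a deterministic function of $g$, $V_\delta$ inherits both conditions with output $\delta(x)$. So $\delta\in\FSPtwo$, and it is total and single-valued.

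The obstacle I expect is bookkeeping rather than conceptual. I need to make sure the proof lengths are a function of $|x|$ alone, padding $\pi_i$ to length $\ell(2N+2)$ and ignoring the excess. I also need the case $|x|=2^n-1$ exactly and the growth $2^n=O(N)$ handled correctly, so that $\delta$ runs in polynomial time in $|x|$ even though it outputs roughly $2N$ bits. This last point is fine because the output length is linear in $N$.
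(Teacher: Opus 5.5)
Your proposal is correct and is essentially the paper's proof. You use the same constructor $\delta(x)=x\,0^{2^n-1-|x|}\,\calA(1^{2^n})$, with $n$ the least length not yet decided by $x$, and the same observation that each move of $\delta$ fixes a fresh length to a hard truth table. Your explicit verification that $\delta$ is a single-valued $\FSPtwo$ function (via $2^n\le 2|x|+2$ and padding the proofs to a length depending only on $|x|$) fills in a step the paper leaves implicit.
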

\begin{proof}
	Let $\calA$ be the algorithm from \Cref{th:simple}.
	On input $x$, compute $n$ so that $2^{n-1} \leq |x| \leq 2^{n}-1$. Let $s(n) = \twonn$.
	Our constructor $\delta$ outputs
	$$\delta(x) = x0^{2^n-|x|-1}\calA(0^{2^n}).$$
	The $^{2^n-|x|-1}$ is to finish defining the language at length $n-1$.
	We then use $\calA$ to define the language at length $n$. Thus $|\delta(x)|=2^{n+1}-1$ and $\delta(x)$ defines a subset of $\{0,1\}^{\leq n}$.

	Let $\gamma$ be any constructor. Then for infinitely many lengths $n$, $R(\gamma,\delta) = R(\delta \circ \gamma)$ does not have an $s(n)$-size circuit at length $n$. Therefore $R(\gamma,\delta) \not\in \SIZE(s(n))$ and $\delta$ wins the Banach-Mazur game, so $\SIZE(s(n))$ is $\FSPtwo$-meager.
\end{proof}
\Cref{th:main} is a corollary of \Cref{th:main_FSPtwo_meager}, because $\FSPtwo$-meager implies meager in $\SEtwo$.
\MainTheorem*

Our title result follows immediately from \Cref{th:main}.
\begin{corollary}\label{co:title}
	The class of problems requiring exponential-size circuits is comeager in $\SEtwo$.
\end{corollary}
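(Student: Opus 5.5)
The corollary is a translation of \Cref{th:main} into the language of comeagerness. Let $X$ be the class of languages requiring exponential-size circuits. Concretely, take $X$ to be the complement $\SIZE(\twonn)^c$: the languages $L$ such that, for infinitely many $n$, $L^{=n}$ has no circuit of size $\twonn$. This is the sense in which the winning strategy of \Cref{th:main_FSPtwo_meager} certifies hardness.

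By the definition of comeager in $\SEtwo$, $X$ is comeager in $\SEtwo$ exactly when $X^c$ is meager in $\SEtwo$. Here $X^c = \SIZE(\twonn)$, and \Cref{th:main} states that this class is meager in $\SEtwo$. That closes the argument at the level of definitions.

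The step needing care is the one underlying \Cref{th:main}: that $\FSPtwo$-meagerness of $\SIZE(\twonn)$ implies $\SIZE(\twonn)\cap\SEtwo$ is $\FSPtwo$-meager. This holds by monotonicity. The same constructor $\delta$ from \Cref{th:main_FSPtwo_meager} forces, against every $\gamma$, a result outside $\SIZE(\twonn)$, hence outside the subclass $\SIZE(\twonn)\cap\SEtwo$. So $\delta$ also wins $G[\SIZE(\twonn)\cap\SEtwo;\all,\FSPtwo]$.

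Finally, I would remark on why the statement is not vacuous. By \Cref{le:RFSPtwo_equals_SEtwo} and the resource-bounded Baire category theorem, $\SEtwo = R(\FSPtwo)$ is not $\FSPtwo$-meager. Hence a comeager class must actually meet $\SEtwo$, and this recovers \Cref{th:Li_intro}. That theorem is only stated for $\Delta$ as in Lutz's framework, so I would check that its diagonalization works for $\FSPtwo$ constructors; I expect this to be the only potential obstacle, and it is a side remark rather than part of the corollary itself.

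If one instead reads ``exponential-size'' as $2^{\epsilon n}$ for some $\epsilon > 0$, the conclusion still follows. The class $\bigcup_\epsilon \SIZE(2^{\epsilon n})$ is contained in $\SIZE(\twonn)$, since $2^{\epsilon n}\le 2^n/n$ for large $n$. Subsets of meager sets are meager, by the same monotonicity argument, so the complement of this class is again comeager in $\SEtwo$.
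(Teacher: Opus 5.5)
Your proof is correct and is the paper's: with $X^c=\SIZE(\twonn)$, \Cref{th:main} gives comeagerness of $X$ in $\SEtwo$ directly from the definition, and \Cref{th:main} itself comes from letting the same $\FSPtwo$ strategy play against the subclass $\SIZE(\twonn)\cap\SEtwo$, exactly as you say. On your asides: the Baire category check you flag is immediate, because for any $\FSPtwo$ constructor $\delta$, player I's move $\gamma(x)=x0$ makes $\delta\circ\gamma$ again single-valued $\FSPtwo$, so $R(\gamma,\delta)\in\SEtwo$; and in the $2^{\epsilon n}$ remark the union must be restricted to $\epsilon\in(0,1)$, since $\SIZE(2^{\epsilon n})$ contains every language once $\epsilon\ge 1$, after which its complement is the stronger ``for every $\epsilon<1$'' hardness class and comeagerness passes upward to the ``for some $\epsilon$'' class.
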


Another corollary is meagerness of $\SIZE(\twonn)$ in $\E^\NP$ under a derandomization hypothesis. This is because standard derandomization hypotheses \cite{KlivMe02} collapse $\FSPtwo$ to $\FP^\NP$ via derandomization of $\FZPP^\NP$.
Resource-bounded category in $\E^\NP$ is defined analogously to category in $\E$, using polynomial-time constructors that have access to an $\NP$ oracle in \Cref{def:resource_bounded_category}.

\begin{lemma}
	$R(\FP^\NP) = \E^\NP$.
\end{lemma}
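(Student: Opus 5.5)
The plan is to mirror the proof of \Cref{le:RFSPtwo_equals_SEtwo}, replacing symmetric alternation by an $\NP$ oracle. The identity $R(\p)=\E$ is the template: a polynomial-time constructor applied iteratively from $\lambda$ determines the characteristic sequence up to position $2^{O(|x|)}$ in exponential time, and conversely an exponential-time decider for $L$ yields a polynomial-time one-bit extender.

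For $R(\FP^\NP)\subseteq \E^\NP$, let $\delta\in\FP^\NP$ be a constructor with oracle machine $M$ running in time $p(m)$ on inputs of length $m$ with $\NP$ oracle $B$. On input $x=s_n$, where $n<2^{|x|+1}$, compute $z_0=\lambda$ and $z_{i+1}=\delta(z_i)$ until $|z_i|>n$, and output $z_i[n]$. Since $z_i\prprefix z_{i+1}$, at most $n+1$ iterations are needed. Each $z_i$ with $i$ at the last step has length at most $p(n)$, because $|z_{i}|\le n$ before the final call and the output of that call has length at most $p(|z_i|)\le p(n)$. So every call runs in time $p(n)=2^{O(|x|)}$ and makes oracle queries of length at most $2^{O(|x|)}$ to $B$.

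The one point requiring care is that these queries are exponentially long, so the simulation must be a machine in $\E^\NP$, meaning $\E$ with an $\NP$ oracle where queries may have length $2^{O(|x|)}$. That is the standard definition of $\E^\NP$: an exponential-time oracle machine may query strings of exponential length, and $B\in\NP$ answers them. The total time is $(n+1)\cdot p(n)=2^{O(|x|)}$, so $L=R(\delta)\in\E^\NP$.

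For $\E^\NP\subseteq R(\FP^\NP)$, let $L$ be decided by an oracle machine $N$ in time $2^{c|x|}$ with oracle $B\in\NP$. Define $\delta(z)=z\,b$, where $b=N^B(s_{|z|})$. Since $|s_{|z|}|\le \log(|z|+1)$, the running time is $2^{c\log(|z|+1)}=\mathrm{poly}(|z|)$. Query lengths are likewise polynomial in $|z|$, so $\delta\in\FP^\NP$. By induction, $\delta^{(m)}(\lambda)$ is the length-$m$ prefix of $\chi_L$, hence $R(\delta)=L$.

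Neither direction is a real obstacle. The only subtlety is checking the oracle-query-length convention for $\E^\NP$ in the first direction, which is standard.
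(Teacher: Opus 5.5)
Your proof is correct and is essentially the argument the paper intends. The paper states this lemma without proof, by analogy with $R(\p)=\E$ and with \Cref{le:RFSPtwo_equals_SEtwo}, whose proof uses exactly your two steps: iterate the constructor from $\lambda$ until position $n$ is determined, and conversely extend $z$ by the bit $\chi_L(s_{|z|})$, computed in time polynomial in $|z|$. Your explicit check that the simulating $\E^\NP$ machine may make exponentially long queries to the same $\NP$ oracle is the one detail specific to the $\FP^\NP$ setting, and you handle it correctly.
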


\begin{enumerate}
	\item  $X$ is {\em $\FP^\NP$-meager} if player II has a winning
	      strategy in the game $G[X;\all,\FP^\NP]$.
	\item  $X$ is {\em meager in $\E^\NP$} if $X \cap \E^\NP$
	      is $\FP^\NP$-meager.
\end{enumerate}

\begin{corollary}
	If $\E^\NP$ requires $2^{\Omega(n)}$-size $\NP$-oracle circuits, then $\SIZE(\twonn)$ is meager in $\E^\NP$.
\end{corollary}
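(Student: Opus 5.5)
The corollary to prove says: if $\E^\NP$ requires $2^{\Omega(n)}$-size $\NP$-oracle circuits, then $\SIZE(\twonn)$ is meager in $\E^\NP$. The plan is to repeat the proof of \Cref{th:main_FSPtwo_meager}, replacing Li's $\FSPtwo$ range-avoidance algorithm with an $\FP^\NP$ algorithm that outputs hard truth tables. By the definition above, it suffices to show that $\SIZE(\twonn)$ is $\FP^\NP$-meager. Then $\SIZE(\twonn)\cap\E^\NP$ is also $\FP^\NP$-meager, since a winning strategy against a class also wins against any subclass.

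\textbf{Step 1: an $\FP^\NP$ algorithm for hard truth tables.}
\begin{itemize}
\item First use the hypothesis with the standard derandomization machinery \cite{KlivMe02}. A language in $\E^\NP$ requiring $2^{\Omega(n)}$-size $\NP$-oracle circuits gives pseudorandom generators that fool $\NP$-oracle circuits, using logarithmic seed length and running in $\FP^\NP$.
\item These generators derandomize $\FZPP^\NP$ into $\FP^\NP$. We also need the known inclusion of single-valued $\FSPtwo$ in $\FZPP^\NP$ (Cai's $\SPtwo\subseteq\ZPP^\NP$ argument, adapted to the single-valued function setting).
\item Combining the two, single-valued $\FSPtwo$ collapses to $\FP^\NP$. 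Applying this to the algorithm $\calA$ of \Cref{th:simple} gives an $\FP^\NP$ algorithm $\calA'$.
\item On input $0^{2^n}$, $\calA'$ outputs a truth table $f\in\{0,1\}^{2^n}$ of circuit complexity at least $\twonn$, in time polynomial in $2^n$.
\end{itemize}

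\textbf{Step 2: the constructor.} Define
$$\delta(x)=x0^{2^n-|x|-1}\calA'(0^{2^n}),$$
where $2^{n-1}\le|x|\le 2^n-1$, exactly as in \Cref{th:main_FSPtwo_meager}. This $\delta$ is computable in time polynomial in $|x|$ with an $\NP$ oracle, so $\delta\in\FP^\NP$.

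\textbf{Step 3: $\delta$ wins.} Let $\gamma$ be any constructor. Every application of $\delta$ in $\delta\circ\gamma$ fixes an entire length-$n$ slice to a hard truth table. This happens for infinitely many $n$, so $R(\gamma,\delta)\notin\SIZE(\twonn)$. Hence $\delta$ is a winning strategy in $G[\SIZE(\twonn);\all,\FP^\NP]$.

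The main obstacle is Step 1. The hardness-versus-randomness step must yield a generator good enough to derandomize $\ZPP^\NP$-type computations with an $\NP$ oracle in function form, and the $\FSPtwo\subseteq\FZPP^\NP$ collapse must preserve single-valuedness so that $\calA'$ outputs exactly Li's canonical string. I would cite \cite{KlivMe02} for the generator and Cai's argument for the inclusion, checking only that the output is fixed. Actually single-valuedness is not needed: any hard string suffices for the constructor. So it is enough that some deterministic $\FP^\NP$ procedure outputs some hard $f$.
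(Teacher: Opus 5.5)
Your proposal is correct and follows the same route as the paper's proof. Single-valued $\FSPtwo$ is contained in $\FZPP^\NP$, following Cai's argument as used by Chen--Hirahara--Ren. The hypothesis lets Klivans--van Melkebeek derandomize $\FZPP^\NP$ to $\FP^\NP$, so the constructor from \Cref{th:main_FSPtwo_meager} lies in $\FP^\NP$. That constructor still wins, which gives $\FP^\NP$-meagerness and hence meagerness in $\E^\NP$.

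Your closing remark is also right. The constructor only needs some deterministic $\FP^\NP$ procedure that outputs a hard truth table, so you do not need to preserve Li's canonical output.
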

\begin{proof}
	Cai \cite{Cai07} showed that $\SPtwo \subseteq \ZPP^\NP$. It also holds that single-valued $\FSPtwo$ is contained in single-valued $\FZPP^\NP$ \cite{ChenHiraharaRen24}. Under the hypothesis, techniques of Klivans and van Melkebeek \cite{KlivMe02} derandomize $\FZPP^\NP$ to $\FP^\NP$ \cite{ChenHiraharaRen24}. Thus the constructor in the proof of \Cref{th:main_FSPtwo_meager} is in $\FP^\NP$, so $\SIZE(\twonn)$ is $\FP^\NP$-meager and is meager in $\E^\NP$.
\end{proof}

\section{Conclusion}\label{sec:conclusion}

Lutz and Mayordomo \cite{Lutz:TPRBM01} asked if $\Ppoly$ has measure 0 in a class smaller than $\DeltaEthree$, the third level of the exponential-time hierarchy. \Cref{th:main} implies $\Ppoly$ is meager in $\SEtwo$. Resource-bounded measure may also be defined in $\SEtwo$ using single-valued $\FSPtwo$ martingales. Does $\Ppoly$ have measure 0 in $\SEtwo$?\\
\ \\

\noindent{\bf Acknowledgement.} I thank an anonymous referee for helpful comments.

\bibliographystyle{plainurl}

\end{document}